\documentclass[reprint,superscriptaddress,amsmath,amssymb,aps,prl,
twocolumn,nofootinbib]{revtex4-2}

\usepackage{graphicx}
\usepackage{booktabs}
\usepackage{tikz}
\usetikzlibrary{arrows.meta}
\usepackage{quantikz}
\usepackage{xcolor}
\usepackage[colorlinks=true,allcolors=blue!60!black]{hyperref}
\usepackage{orcidlink}   

\newtheorem{proposition}{Proposition}

\newcommand{\nm}{n_{\mathrm m}}
\newcommand{\nw}{n_{\mathrm w}}
\providecommand{\ket}[1]{\lvert #1 \rangle}
\providecommand{\bra}[1]{\langle #1 \rvert}
\DeclareMathOperator{\rank}{rank}

\begin{document}
	
	\title{Memory Separations from a Collective-Spin Quantum Register}
	
	\author{Jamal Slim\orcidlink{0000-0002-9418-8459}}
	\email{jamal.slim@desy.de}
	\affiliation{Deutsches Elektronen-Synchrotron DESY, 22603 Hamburg, Germany}
	
	\date{\today}
	
	\begin{abstract}
		A spin one-half driven only by total-spin operators generates a renewal
		process in which one interevent time $n^\ast$ is forbidden while all others
		occur. No classical hidden Markov model reproduces it with fewer than $1+\sqrt{n^\ast}$ states, an unbounded separation whose only resource is
		one relative phase of $\pi$. With $K$ amplitudes the same interference yields
		$K(K{+}1)/2$ decay channels, and for positive weights and generic rates the classical cost is
		exactly $K(K{+}1)/2$ states. Both are realized by a depth-three circuit whose
		parameter count does not grow with the register.
	\end{abstract}
	
	\maketitle
	
	\section{Introduction}
	\label{sec:intro}
	A stochastic process can be generated by a memory that is updated and read
	out one step at a time. The classical version is a hidden Markov model, and
	the relevant cost is the number of hidden states. The quantum version
	replaces the hidden states by a register that is never measured directly,
	and encodes them as states that need not be mutually
	orthogonal~\cite{gu}. Since the number of configurations is then decoupled
	from the dimension of the register, the quantum memory can be smaller than
	any classical one for the same process~\cite{elliott,elliottgu,zonnios},
	in some families without bound~\cite{garner,rare,longrange}, and the
	reduction has been demonstrated
	experimentally~\cite{palsson,ghafari,wu}.
	
	We study this in a register with a specific and experimentally natural
	restriction. The memory is driven only by \emph{collective} operations,
	meaning operations that act identically on every memory qubit and therefore
	depend on the register only through its total spin. Such a register is what
	an atomic ensemble or a spin-squeezed sample provides, and its dimension is
	$\nm+1$ rather than $2^{\nm}$ because the state never leaves the
	permutation-symmetric subspace. The register is a sequential Born machine with a
	coherent ancilla in the sense of Ref.~\cite{schon}, equivalently a hidden
	quantum Markov model~\cite{glasser}, specialized here to a memory that
	admits only collective operations.
	
	Two results follow. Setting one relative phase to $\pi$ makes the
	interevent distribution of the generated process vanish exactly at a chosen
	time $n^\ast$, and the classical cost of that single forbidden time grows
	without bound in $n^\ast$. Using $K$ amplitudes instead of two, the same
	interference generates $K(K{+}1)/2$ decay channels, and when every weight is positive the
	classical cost is exactly $K(K{+}1)/2$ states for generic decay rates, a
	separation that is attained rather than inferred from lower bounds. Both are
	realized by a depth-three circuit whose parameter count is independent of
	the register size.
	
	\section{The register}
	\label{sec:model}
	One block of the model prepares a work register of $\nw$ qubits in
	$\ket{0}$, applies a unitary $U(\theta)$, measures the work register in the
	computational basis, emits the outcome $x$, and resets. Writing
	\begin{equation}
		A_x=\big(\bra{x}_{\mathrm w}\otimes I_{\mathrm m}\big)\,U(\theta)\,
		\big(\ket{0}_{\mathrm w}\otimes I_{\mathrm m}\big),
		\label{eq:kraus}
	\end{equation}
	which acts on the memory alone and satisfies $\sum_x A_x^\dagger A_x=I$, the
	law of a length-$B$ output string is
	$p(x_1\ldots x_B)=\lVert A_{x_B}\cdots A_{x_1}\ket{0}\rVert^2$.
	
	The memory is driven only by the total-spin operators
	$J_\alpha=\tfrac12\sum_q\sigma_\alpha^{(q)}$ and by couplings
	$Z_k\otimes J_z$ to the work register. The initial state
	$\ket{0}^{\otimes\nm}$ is permutation symmetric and every collective
	operation preserves that symmetry, so the memory remains in the spin-$j$
	irreducible representation with $j=\nm/2$, of dimension $\nm+1$, spanned by
	the Dicke states~\cite{dicke}. Nothing is truncated. Two consequences
	matter here. The memory carries $\nm+1$ amplitudes rather than $2^{\nm}$,
	and a collective rotation carries one angle however many qubits it turns, so
	the number of parameters does not depend on $\nm$. One block is shown in
	Fig.~\ref{fig:circuit}.
	
	\begin{figure}[b]
		\centering
		\resizebox{\columnwidth}{!}{%
			\begin{quantikz}[column sep=0.18cm, row sep=0.7cm]
				\lstick{$\ket{0}_{\mathrm w}$}
				& \gate[style={fill=green!20}]{R_Y}
				& \qw & \qw & \qw
				& \gate[style={fill=blue!20}]{R_Z}
				& \gate[style={fill=blue!20}]{R_Y}
				& \gate[2,style={fill=orange!30}]{Z_k\!\otimes\! J_z}
				& \meter{} \\
				\lstick{$\rho_{\mathrm m}$}
				& \qw
				& \gate[style={fill=red!20}]{R_Z}
				& \gate[style={fill=red!20}]{R_Y}
				& \gate[style={fill=red!20}]{R_Z}
				& \qw & \qw &
				& \qw
			\end{quantikz}%
		}
		\caption{One block. The work register (top) is prepared blank, rotated
			(green, blue), coupled to the memory (orange) and measured, and the outcome
			is the emitted symbol. The memory (bottom) is never measured and its
			rotations (red) are collective, one angle each however many qubits they
			turn. The layer repeats $L$ times per block. The operators of
			Eq.~\eqref{eq:kraus} are the blocks $A_x=\bra{x}U\ket{0}$ of this circuit.}
		\label{fig:circuit}
	\end{figure}
	
	\section{A forbidden interevent time}
	\label{sec:forbidden}
	A renewal process emits events at random times, the intervals between
	successive events being independent and identically distributed. An observer
	samples at a fixed interval $\Delta t$, writing $1$ in a bin that contains an
	event and $0$ otherwise. Let $\Phi(n)$ be the probability that a run of
	zeros reaches length $n$, and $d_n=\Phi(n)-\Phi(n{+}1)$ the probability that
	it ends at $n$, so that $d_n$ is the interevent distribution.
	
	Suppose the memory carries two non-orthogonal generator states
	$\ket{\phi_1},\ket{\phi_2}$ with decay factors
	$\Gamma_j=e^{-\gamma_j\Delta t}$, and that after each event it is reset to
	$\ket{\phi_R}=a_1\ket{\phi_1}+a_2\ket{\phi_2}$. The event Kraus operator
	carries $\sqrt{1-\Gamma_j}$ on each generator, so the amplitude for an
	interval of length $n$ is
	$\sqrt{1-\Gamma_1}\,a_1\Gamma_1^{n/2}+\sqrt{1-\Gamma_2}\,a_2\Gamma_2^{n/2}$.
	Choosing
	\begin{equation}
		r\equiv\frac{a_2}{a_1}
		=-\sqrt{\frac{1-\Gamma_1}{1-\Gamma_2}}
		\Big(\frac{\Gamma_1}{\Gamma_2}\Big)^{n^\ast/2}
		\label{eq:forbid}
	\end{equation}
	makes that amplitude vanish at $n=n^\ast$. The process then has
	$d_{n^\ast}=0$ while $d_n>0$ for every $n\ne n^\ast$, so an interevent time
	of exactly $n^\ast$ steps is forbidden although $n^\ast\pm1$ are not.
	
	The ratio in Eq.~\eqref{eq:forbid} is real and negative, a relative phase of
	$\pi$. Normalization in the non-orthogonal generator frame reads
	$a_1^2+a_2^2+2a_1a_2g=1$ with
	$g=\sqrt{(1-\Gamma_1)(1-\Gamma_2)}/(1-\sqrt{\Gamma_1\Gamma_2})$ fixed by
	unitarity, giving $a_1=(1+r^2+2rg)^{-1/2}$. This is not the constraint
	$a_1^2+a_2^2=1$, and the phase of $\pi$ is normalizable precisely because
	the generators are non-orthogonal. At $n^\ast=5$ with
	$(\gamma_1,\gamma_2)=(12,2)$ and $\Delta t=0.05$ this gives $g=0.7017$,
	$r=-0.6238$, $a_1=1.3952$ and $a_2=-0.8704$, and the realized
	$d_{n^\ast}=5.9\times10^{-32}$ against neighbors $4.3\times10^{-3}$ and
	$2.1\times10^{-3}$. The memory is two-dimensional, a single spin one-half.
	
	\begin{proposition}
		\label{prop:forbid}
		Any classical hidden Markov model generating a process with
		$d_{n^\ast}=0$ and $d_n>0$ for all $n\ne n^\ast$ has at least $1+\lceil\sqrt{n^\ast}\,\rceil$ states.
	\end{proposition}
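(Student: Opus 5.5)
Write the interevent distribution of an $N$-state classical hidden Markov model as $d_n=u^{\top}T_0^{\,n-1}v$. Here $T_0\ge 0$ is the substochastic matrix of zero-emitting transitions. The vector $u$ is the state distribution right after an event, and $v$ collects the one-step event probabilities (up to the usual indexing convention). Absorb the event emission into $u$ and $v$ and restrict to the states reachable from $u$ that can still reach an event. Then the relevant dynamics is a nonnegative matrix $M$ of size $m\le N-1$. The count $N\ge m+1$ comes from at least one state being needed to represent the post-event or emitting configuration, and I would argue this separately: either a renewal HMM needs a distinguished state, or the zero-run block misses at least one state. So it suffices to show that a nonnegative $m\times m$ matrix with $d_n=u^{\top}M^{n-1}v$, $u,v\ge0$, and zero pattern exactly $\{n^\ast\}$ forces $m\ge\lceil\sqrt{n^\ast}\rceil$.

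The key step is combinatorial, since only the zero pattern of the sequence matters. Because everything is nonnegative, $d_n>0$ holds iff the directed graph $G$ of $M$ has a walk of length $n-1$ from $\mathrm{supp}(u)$ to $\mathrm{supp}(v)$. The set $S=\{\ell: \text{such a walk exists}\}$ must contain every $\ell\ge0$ except $\ell^\ast=n^\ast-1$. It contains $\ell^\ast\pm1$, so $G$ has cycles, and since all large lengths occur, the relevant strongly connected component structure is aperiodic. Decompose a walk as a simple path plus cycles. In the primitive case the classical Wielandt/Dulmage–Mendelsohn theory of exponents and of "gaps" in walk-length sets gives the bound: for a digraph on $m$ vertices, the largest length $\ell$ that is missing while lengths beyond it are all realized between two vertex sets is at most $(m-1)^2+1$-type quantities. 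More precisely, I would use the Frobenius-number bound. If the cycle lengths available are $c_1<\dots$ with $\gcd 1$ and all $c_i\le m$, the walk lengths through a fixed cycle system are an offset plus the numerical semigroup $\langle c_i\rangle$. Its Frobenius number is at most $(c_1-1)(c_2-1)-1<m^2$, and the offset (a path length) is at most $m-1$. A gap at $\ell^\ast$ with $\ell^\ast-1\in S$ requires $\ell^\ast$ to be below the conductor. This gives $\ell^\ast\lesssim m^2$, hence $m\ge\sqrt{n^\ast}$ up to the ceiling.

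The main obstacle is that $S$ is a union over many source/target pairs and many cycle systems, not a single shifted semigroup. Since it is a union, one gap in $S$ forces a gap in every constituent set at the same length. So it suffices to bound the conductor of a single constituent that already contains $\ell^\ast\pm1$, or indeed any length beyond $\ell^\ast$. For this I would invoke the known sharp bound that in a primitive digraph on $m$ vertices $M^{k}$ is positive for $k\ge (m-1)^2+1$ (Wielandt). For the reducible case, use the analogous bound on the "eventual positivity index" between two vertices, which is at most $m^2$-ish. Getting the exact constant to yield $\lceil\sqrt{n^\ast}\rceil$ rather than $\sqrt{n^\ast}-O(1)$ is the delicate part. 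I would try to show $\ell^\ast\le (m-1)^2+m-1<m^2$, giving $m^2>n^\ast-1$, i.e.\ $m\ge\lceil\sqrt{n^\ast}\rceil$. If that fails at boundary cases, I would check small $n^\ast$ by hand.
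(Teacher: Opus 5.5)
Your proposal has two load-bearing gaps. The first is the ``$+1$''. You claim $N\ge m+1$ because some state must lie outside the trimmed zero-emitting block, but nothing in a hidden Markov model forces this. In the $R$-state machine of Proposition~\ref{prop:exact}, every state is entered after an event and both emits $0$ and emits an event. Trimming therefore removes nothing, and $m=N$.

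In the paper the $+1$ is not an extra state. It comes from the $D-1$ in Wielandt's bound, applied to the full $D\times D$ no-event block. You lost it through indexing. By definition $d_n=\Phi(n)-\Phi(n+1)$ is the probability of exactly $n$ zeros between events, so $d_n=c^{\mathsf T}A^nb$ with exponent $n$, not $n-1$. With your shift to $\ell^\ast=n^\ast-1$ and without the extra state, even the sharp bound gives only $D\ge1+\lceil\sqrt{n^\ast-1}\,\rceil$. That is $3$ rather than $4$ at $n^\ast=5$. Your fallback estimate $(m-1)^2+m-1$ would not suffice even with the correct exponent, since it still allows $D=3$ at $n^\ast=5$. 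The sharp constant is genuinely needed.

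The second gap is that the combinatorial core is only sketched, and the sketch uses the wrong notion in the reducible or imprimitive case. That all large lengths occur does not make the relevant components aperiodic. Take the $2\times2$ permutation matrix of a $2$-cycle with $u=(1,1)^{\mathsf T}$ and $v=(0,1)^{\mathsf T}$. Then $u^{\mathsf T}A^nv=1$ for every $n$, yet the only component has period $2$.

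So the constituents of your union $S$ are eventually periodic sets, not shifted numerical semigroups. A constituent containing lengths beyond $\ell^\ast$ may never contain $\ell^\ast$, so bounding its ``conductor'' or an eventual-positivity index proves nothing.

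What you need is eventual periodicity of the whole zero pattern. For every $D\times D$ nonnegative matrix, reducible or not, there are $p\ge1$ and an index $k(A)\le(D-1)^2+1$ such that $A^{n+p}$ and $A^n$ have the same zero pattern for all $n\ge k(A)$. If $n^\ast\ge k(A)$, then $d_{n^\ast+p}>0$ forces $d_{n^\ast}>0$, a contradiction. Hence $n^\ast\le(D-1)^2$, and integrality of $D$ gives $D\ge1+\lceil\sqrt{n^\ast}\,\rceil$ directly, with no extra state and no small cases to check. This is the paper's argument.
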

	
	\noindent\emph{Proof.} With $A,b,c\ge0$ entrywise, $d_n=c^{\mathsf T}A^nb$
	vanishes if and only if the support digraph of $A$ has no walk of length $n$
	from $\mathrm{supp}\,b$ to $\mathrm{supp}\,c$. Whether $d_n$ vanishes therefore depends only on the zero
	pattern of $A^n$. That pattern is periodic in $n$ from an index at most
	$(D{-}1)^2+1$ onward, a bound due to Wielandt for primitive
	matrices~\cite{wielandt} that holds for every $D\times D$ nonnegative
	matrix, reducible or not~\cite{jiangshao}. Since $d_n>0$ for every
	$n>n^\ast$, periodicity would force $d_{n^\ast}>0$ if $n^\ast$ reached
	that index. Hence $n^\ast\le(D{-}1)^2$, that is
	$D\ge1+\sqrt{n^\ast}$. $\square$
	
	\medskip
	\noindent By Proposition~\ref{prop:forbid} a spin one-half therefore
	generates processes whose classical cost grows without bound in $n^\ast$, giving $4,5,6,9,11$ states at $n^\ast=5,10,20,50,100$, with $d_{n^\ast}$ vanishing to double precision
	and Hankel rank $2$ throughout. The separation is unbounded, as are those of
	Refs.~\cite{garner,rare,longrange}, but it uses no positive-realization
	theory and its classical side is Perron--Frobenius rather than an entropy
	or a dimension count. Coherence has been maintained over
	several steps of such a simulation experimentally~\cite{interfere}, which
	is the requirement the construction here places on the register.
	The entire resource is one relative phase of $\pi$ and one amplitude
	ratio.
	Figure~\ref{fig:wielandt} shows the mechanism. The bound is tight at $n^\ast=1$ and $2$ but not beyond.
	Exhaustive search over every classical chain with up to four states gives
	the exact minimum as $2$, $3$ and $4$ states at $n^\ast=1$, $2$ and $3$,
	and more than four from $n^\ast=4$ onward, against $2$, $3$, $3$ and $3$
	from the bound, consistent with a true cost linear in $n^\ast$.
	
	\begin{figure}[tb]
		\centering
		\begin{tikzpicture}[
			x=1cm, y=1cm, >={Stealth[length=2mm]},
			st/.style={draw, circle, minimum size=8mm, inner sep=0pt, font=\small,
				fill=blue!12, draw=blue!55, thick},
			lbl/.style={font=\scriptsize}]
			\node[st] (b) at (0,0) {$b$};
			\node[st] (u) at (1.9,0) {$u$};
			\node[st] (c) at (3.8,0) {$c$};
			\draw[->, thick, red!70!black] (b) -- node[above, lbl] {$0\!:\!p_1$} (u);
			\draw[->, thick, red!70!black] (u) -- node[above, lbl] {$0\!:\!p_2$} (c);
			\draw[->, thick, red!70!black] (c) to[bend left=45]
			node[below, lbl] {$0\!:\!p_3$} (u);
			\draw[->, thick, red!70!black] (c) to[out=40, in=-40, looseness=9]
			node[right, lbl] {$0\!:\!p_4$} (c);
			\node[lbl, align=center] at (1.9,-1.35)
			{walk lengths $b\!\to\!c$ are $2,3,4,5,\dots$, one gap, at $n=1$};
		\end{tikzpicture}
		\caption{Why a forbidden interevent time is expensive classically, in the
			edge-emitting convention of Ref.~\cite{zonnios}, in which each transition
			carries an emitted symbol and a weight. Every edge shown belongs to the
			no-event block $A$, an event ending the interval. Since
			$d_n=c^{\mathsf T}A^nb$ with all factors nonnegative, $d_{n^\ast}=0$ holds
			exactly when no walk of length $n^\ast$ joins $\mathrm{supp}\,b$ to
			$\mathrm{supp}\,c$, and because the zero pattern of $A^n$ is periodic from an
			index at most $(D{-}1)^2+1$ onward, retaining walks of every other length
			forces $n^\ast\le(D{-}1)^2$.}
		\label{fig:wielandt}
	\end{figure}
	
	\section{Quadratically many channels}
	\label{sec:advantage}
	The forbidden-time construction uses two amplitudes and one
	phase. With $K$ amplitudes the phases do more than cancel a single term,
	and the memory generates quadratically many decay channels.
	
	\subsection{The memory dimension of a renewal process}
	\begin{proposition}
		\label{prop:hankel}
		Unitarity forces the memory Gram matrix
		$G_{nm}=\langle\varsigma(n)|\varsigma(m)\rangle$ to obey
		$G_{nm}=r_nr_mG_{n+1,m+1}+s_ns_m e^{\mathrm i(\xi_m-\xi_n)}$, with
		$r_n=\sqrt{\Phi(n{+}1)/\Phi(n)}$, $s_n=\sqrt{1-r_n^2}$ and $\xi_n$ the
		phases carried by the reset branch. Setting
		$w_n=\sqrt{d_n}\,e^{\mathrm i\xi_n}$, the recursion telescopes to
		$H_{nm}=\sum_{k\ge0}\overline{w_{n+k}}\,w_{m+k}$ with
		$H=D^{1/2}GD^{1/2}$, $D=\mathrm{diag}\,\Phi$, so that
		$\rank G$ equals the Hankel rank of $w$.
	\end{proposition}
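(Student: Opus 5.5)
The plan has three parts: fix a concrete model of the memory states, derive the one-step recursion by acting with the block unitary, and then telescope the recursion and read off the rank.

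\emph{The memory model.} I take $\ket{\varsigma(n)}$ to be the normalized memory state after a run of $n$ zeros since the last event. One block acts unitarily on work register $\otimes$ memory. From $\ket{0}_{\mathrm w}\ket{\varsigma(n)}$ it must produce
\[
r_n\ket{0}_{\mathrm w}\ket{\varsigma(n{+}1)}+s_n e^{\mathrm i\xi_n}\ket{1}_{\mathrm w}\ket{\varsigma(0)},
\]
where the reset branch returns the memory to the common reset state. The branch weights follow from the survival function. The conditional probability that the run continues is $\Phi(n{+}1)/\Phi(n)=r_n^2$. Any phase on the no-event branch can be absorbed into the definition of $\ket{\varsigma(n{+}1)}$, so the only free phases are the $\xi_n$ on the reset branch.

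\emph{The recursion.} Unitarity preserves inner products. Equating $\langle\varsigma(n)|\varsigma(m)\rangle$ with the inner product of the two images, and using $\langle 0|1\rangle=0$ and $\langle\varsigma(0)|\varsigma(0)\rangle=1$, gives
\[
G_{nm}=r_nr_mG_{n+1,m+1}+s_ns_me^{\mathrm i(\xi_m-\xi_n)} .
\]
This is the stated recursion.

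\emph{Telescoping.} I multiply the recursion by $\sqrt{\Phi(n)\Phi(m)}$. Since $r_n\sqrt{\Phi(n)}=\sqrt{\Phi(n{+}1)}$, the first term becomes $H_{n+1,m+1}$. Since $s_n\sqrt{\Phi(n)}=\sqrt{\Phi(n)-\Phi(n{+}1)}=\sqrt{d_n}$, the second term becomes $\overline{w_n}w_m$. This yields
\[
H_{nm}=H_{n+1,m+1}+\overline{w_n}w_m .
\]
Iterating $N$ times leaves $\sum_{k<N}\overline{w_{n+k}}w_{m+k}+H_{n+N,m+N}$. For the remainder I use $|G|\le1$ to get $|H_{n+N,m+N}|\le\sqrt{\Phi(n{+}N)\Phi(m{+}N)}$, which tends to $0$ because the interevent distribution is normalized (no defective mass at infinity). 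The series converges absolutely by Cauchy--Schwarz, since $\sum_k|w_k|^2=\sum_k d_k=1$.

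\emph{The rank statement.} Let $\mathcal H_w$ be the infinite matrix with entries $(\mathcal H_w)_{kn}=w_{n+k}$, whose columns are the shifted sequences $(w_n,w_{n+1},\dots)$. The telescoped sum says $H=\mathcal H_w^\dagger\mathcal H_w$, so $\rank H=\rank \mathcal H_w$, the Hankel rank of $w$. This step is the main obstacle and needs care in infinite dimensions. I would argue on finite truncations: the $N\times N$ leading block of $H$ is the Gram matrix of the first $N$ columns of $\mathcal H_w$ in $\ell^2$. A Gram matrix has rank equal to the dimension of the span of its vectors, so that span has dimension equal to the rank of the truncated Hankel matrix. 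Taking the supremum over $N$ identifies the two ranks.

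Finally, $D=\mathrm{diag}\,\Phi$ has strictly positive entries wherever the states are reachable. Congruence by the invertible $D^{1/2}$ on each finite block preserves rank, so $\rank G=\rank H$. If $\Phi$ vanishes beyond some index, those states are never visited and I would drop them from both matrices.
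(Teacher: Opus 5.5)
Your proposal is correct and follows the route the paper indicates. The paper gives no separate proof beyond the sketch built into the statement: inner-product preservation under the block unitary yields the recursion, rescaling by $\sqrt{\Phi(n)\Phi(m)}$ turns it into $H_{nm}=H_{n+1,m+1}+\overline{w_n}w_m$, and the telescoped sum exhibits $H$ as the Gram matrix of the columns of the Hankel matrix of $w$. You also supply details the paper leaves implicit: the remainder $H_{n+N,m+N}$ vanishes only because $\Phi(n)\to0$, ranks of the infinite matrices are handled through finite truncations, and unreachable states are dropped when $\Phi$ has finite support.
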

	
	By Proposition~\ref{prop:hankel} the minimal memory dimension is therefore
	the length of the shortest exponential sum whose \emph{modulus squared} is
	$d_n$. The reset phases are the only freedom, and the proposition says
	exactly what they buy.
	
	\subsection{Interference generates channels}
	Generalizing Eq.~\eqref{eq:forbid}, take $K$ terms with real positive bases,
	\begin{equation}
		w_n=\sum_{l=1}^{K}c_l\,\Gamma_l^{\,n/2},\qquad c_l=r_le^{\mathrm i\phi_l},
		\qquad \Gamma_l=e^{-\gamma_l\Delta t}.
		\label{eq:ansatz}
	\end{equation}
	Expanding Eq.~\eqref{eq:ansatz},
	$|w_n|^2=\sum_{l,m}\overline{c_l}c_m(\Gamma_l\Gamma_m)^{n/2}$, and
	since $(\Gamma_l\Gamma_m)^{n/2}=e^{-[(\gamma_l+\gamma_m)/2]n\Delta t}$, this
	is itself a renewal process. The diagonal terms give $K$ channels at rates
	$\gamma_l$ with weights $r_l^2$, and the off-diagonal terms give
	$K(K{-}1)/2$ further channels at the pairwise averages
	$(\gamma_l+\gamma_m)/2$ with weights $2r_lr_m\cos(\phi_l-\phi_m)$. The first
	are populations. The second are interference, and their weights carry the
	sign of the cosine, so they may be negative. A classical mixture of decay
	channels weights each by the probability of decaying through it and has no
	negative weight available, and negativity of this kind has been identified
	as a resource for nonclassical memory advantage~\cite{negativity}.
	Figure~\ref{fig:channels} shows the six channels
	generated by $K=3$ amplitudes, one of which is negative.
	
	Re-expanding $|w_n|^2$ in the predicted channels reproduces it to
	$4.4\times10^{-16}$ for base rates $(12,6,2)$, and the minimal number of
	exponentials fitting $|w|^2$ is $K$ in every case tested, with $K-1$ failing
	by four orders of magnitude or more.
	
	\begin{figure}[tb]
		\centering
		\includegraphics[width=\columnwidth]{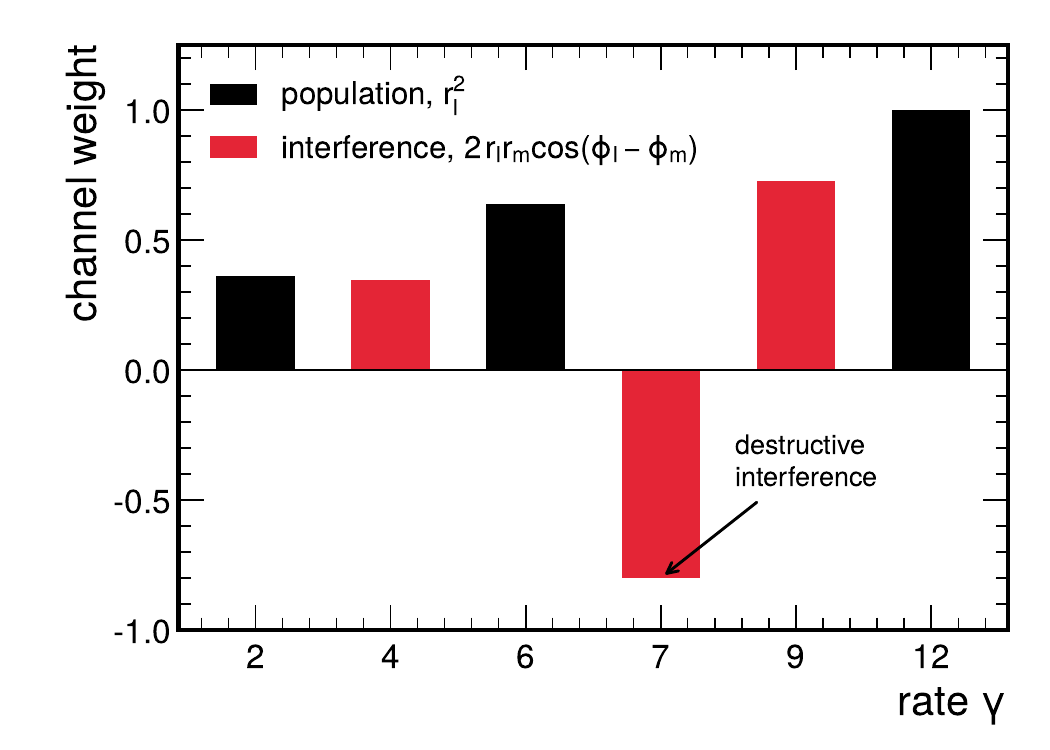}
		\caption{Channels generated by $K=3$ amplitudes with base rates $(12,6,2)$
			and phases $(0,1.1,2.3)$. Bars are equally spaced and the tick labels give
			the true rates. Dark bars at $\gamma=12,6,2$ are populations, weights
			$1.000$, $0.640$, $0.360$. Light bars at the pairwise averages
			$\gamma=9,7,4$ are interference, weights $+0.726$, $-0.800$, $+0.348$. The
			channel at $\gamma=7$, the average of $12$ and $2$, has negative weight
			because $\cos(\phi_1-\phi_3)<0$, so it exists only as destructive
			interference.}
		\label{fig:channels}
	\end{figure}
	
	\subsection{The classical cost}
	\begin{proposition}
		\label{prop:hmm}
		Let $\Phi(n)=\sum_{m=1}^{R}p_m\Gamma_m^{\,n}$ with $R$ distinct $\Gamma_m$
		and all $p_m\ne0$. Then any classical hidden Markov model generating the
		process has at least $R$ hidden states.
	\end{proposition}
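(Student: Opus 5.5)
The plan is a rank argument. Any classical hidden Markov model with $D$ hidden states has an observation matrix of rank at most $D$, and for this process that matrix already has rank at least $R$.

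First, write the model in its standard form. For a $D$-state model, let $A\ge0$ be the no-event block, $T_1$ the event block, and $\pi$ the stationary distribution. After an event the hidden state is distributed as $b=T_1\pi/(\mathbf 1^{\mathsf T}T_1\pi)$, normalized. The probability that a run of zeros following an event reaches length $n$ is then
\begin{equation}
\Phi(n)=\mathbf 1^{\mathsf T}A^{n}b .
\end{equation}
This is the same representation used in the proof of Proposition~\ref{prop:forbid}, with $c$ replaced by the all-ones vector. The shift between $n$ and $n-1$ in the convention only changes the $p_m$ by the nonzero factors $\Gamma_m^{\pm1}$.

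Next, form the Hankel matrix $\mathcal H_{nm}=\Phi(n+m)=\mathbf 1^{\mathsf T}A^{n}A^{m}b$ for $n,m\ge0$. This factors as $\mathcal O\,\mathcal C$, where $\mathcal O$ has rows $\mathbf 1^{\mathsf T}A^n$ and $\mathcal C$ has columns $A^m b$. Each factor has inner dimension $D$, so $\rank\mathcal H\le D$. Nonnegativity is not even needed here; the bound is the usual one for any linear realization.

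Then compute $\rank\mathcal H$ from the given form of $\Phi$. We have $\mathcal H=V^{\mathsf T}PV$, where $P=\mathrm{diag}(p_1,\dots,p_R)$ and $V_{mn}=\Gamma_m^{\,n}$. Truncate to the leading $R\times R$ block. Then $V$ is a square Vandermonde matrix, invertible because the $\Gamma_m$ are distinct, and $P$ is invertible because every $p_m\ne0$. So the truncated block has rank exactly $R$, which gives $R\le\rank\mathcal H\le D$.

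The only delicate point is justifying the renewal representation. One must ensure that a general hidden Markov model generating a renewal process really yields $\Phi(n)=\mathbf 1^{\mathsf T}A^nb$ for a single fixed $b$, independent of history. This follows from the renewal property: conditioned on an event, the future is independent of the past, so the distribution of the post-event zero run is a fixed linear functional. Concretely, one can take $b$ to be the conditional state distribution after an event in stationarity, and then $\Phi(n)$ is exactly $P(0^n\mid 1)$. The Hankel factorization then applies as stated.

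Alternatively, one can use the generic string Hankel matrix of the whole process, $P(uv)$ over pasts $u$ and futures $v$, whose rank is at most $D$. Its submatrix on past $1\,0^{n}$ and future $0^{m}$ is proportional to $\Phi(n+m)$, which gives the same bound without any explicit normalization.
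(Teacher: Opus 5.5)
Your proposal is correct and follows the paper's proof. Both arguments factor the Hankel matrix $\Phi(i+j)$ through $\mathbb R^{D}$ via the no-event block, which gives rank at most $D$, and both get rank $R$ from the Vandermonde decomposition with invertible $\mathrm{diag}(p)$. Your justification of a single post-event vector $b$ via the renewal property, and your truncation to an invertible $R\times R$ block, supply details the paper leaves implicit.
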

	
	\noindent\emph{Proof.} Form $H_{ij}=\Phi(i{+}j)$. On one hand
	$H=V\,\mathrm{diag}(p)\,V^{\mathsf T}$ with $V_{im}=\Gamma_m^{\,i}$
	Vandermonde in distinct nodes, so $\rank H=R$. On the other hand a classical
	model with $K'$ states and no-event transition block $T$ gives
	$\Phi(n)=\pi^{\mathsf T}T^{n}\mathbf 1$, so
	$H_{ij}=(\pi^{\mathsf T}T^{i})(T^{j}\mathbf 1)$ factors through
	$\mathbb R^{K'}$ and $\rank H\le K'$. Hence $K'\ge R$. $\square$
	
	\medskip
	\noindent This is the Hankel-rank characterization of minimal
	realizations~\cite{jaeger,balle}, included to fix notation. Its quantum
	counterpart is the completely positive realization
	problem~\cite{monras}, which is not always solvable in finite
	dimension~\cite{fanizza}. The rates
	$(\gamma_l+\gamma_m)/2$ are distinct for generic base rates, a collision
	requiring $\gamma_l+\gamma_m=\gamma_p+\gamma_q$, which is a measure-zero
	condition. Combining with the previous subsection, a memory of dimension $K$
	generates processes that any classical hidden Markov model needs at least
	$K(K{+}1)/2$ states to reproduce, that is $3$ against $2$ at $K=2$ and $105$
	against $14$ at $K=14$. Both sides of this comparison are finite hidden
	Markov models, so the statement involves no convention about what counts as
	a classical memory state.
	
	The lower bound is attained. When every relative phase satisfies
	$|\phi_l-\phi_m|<\pi/2$ all channel weights are positive, and the process
	is then generated by an explicit classical machine with exactly $R$
	states.
	
	\begin{proposition}
		\label{prop:exact}
		Let $\Phi(n)=\sum_{m=1}^{R}p_m\Gamma_m^{\,n}$ with $R$ distinct $\Gamma_m$
		and all $p_m>0$. Then the minimal classical hidden Markov model generating
		the process has exactly $R$ hidden states.
	\end{proposition}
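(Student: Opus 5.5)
The lower bound is Proposition~\ref{prop:hmm}, since $p_m>0$ in particular gives $p_m\ne0$. What remains is to exhibit a classical hidden Markov model with exactly $R$ hidden states that generates the renewal process whose survival function is $\Phi(n)=\sum_m p_m\Gamma_m^{\,n}$. I would use a mixture of geometric waiting times. Hidden state $m$ stands for ``the current interval is decaying through channel $m$''. From state $m$, with probability $\Gamma_m$ the chain emits $0$ and stays in $m$. With probability $1-\Gamma_m$ it emits $1$, ending the interval, and jumps to state $m'$ with probability $\pi_{m'}$. This resamples the channel for the next interval independently of the past. The no-event block is then $T=\mathrm{diag}(\Gamma_1,\dots,\Gamma_R)$, and every row of the full transition matrix sums to $\Gamma_m+(1-\Gamma_m)\sum_{m'}\pi_{m'}=1$.

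Next I fix $\pi$ so that the survival function comes out right. In the convention of Proposition~\ref{prop:hmm}, $\Phi(n)=\pi^{\mathsf T}T^n\mathbf 1=\sum_m\pi_m\Gamma_m^{\,n}$. The choice $\pi_m=p_m$ therefore reproduces $\Phi$ exactly. This is a legitimate probability vector: $p_m>0$ by hypothesis, and $\sum_m p_m=\Phi(0)=1$, since a run of zeros trivially reaches length $0$. The hypothesis $\Gamma_m\in(0,1)$ holds for the decay factors $e^{-\gamma_m\Delta t}$ with $\gamma_m>0$, so all transition weights lie in $[0,1]$.

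I also need to check that the emitted process is the renewal process itself, not just a process with the right marginal interval law. Because the post-event state distribution is always $\pi$, whatever state the event occurred in, successive intervals are i.i.d. The law of an output string then factors into products of $d_n=\Phi(n)-\Phi(n+1)$ over the completed intervals. The first interval starts from $\pi$, which matches a renewal process started at an event; if the paper intends a stationary start, I would instead initialize with the stationary distribution of the chain.

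The main subtlety is the statement's hypothesis. Positivity of every $p_m$ is used essentially: it is what makes the diagonal mixture a valid stochastic machine, and with some negative $p_m$ no such diagonal construction exists. Beyond that the argument is routine, and combining the construction with Proposition~\ref{prop:hmm} gives exactly $R$.
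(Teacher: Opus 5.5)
Your proposal is correct and matches the paper's proof: the lower bound comes from Proposition~\ref{prop:hmm}, and the upper bound from the $R$-state mixture of geometric waiting times with entry probabilities $p_m$ and self-loop probability $\Gamma_m$, which is a valid machine because $p_m>0$ and $\Phi(0)=1$. Your additional checks are sound and cover points the paper leaves implicit: the row sums, the requirement $\Gamma_m\in(0,1)$, the independence of successive intervals, and the remark on a stationary start.
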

	
	\noindent\emph{Proof.} Proposition~\ref{prop:hmm} gives at least $R$. For
	the converse, let the machine enter state $m$ after each event with
	probability $p_m$, and in state $m$ emit $0$ and remain with probability
	$\Gamma_m$ or emit an event with probability $1-\Gamma_m$. The run length
	in state $m$ is geometric with survival $\Gamma_m^{\,n}$, so the machine
	reproduces $\Phi(n)=\sum_mp_m\Gamma_m^{\,n}$ exactly with $R$ states. The
	entry probabilities are a valid distribution because $p_m>0$ and
	$\Phi(0)=1$. $\square$
	
	\medskip
	\noindent A memory of dimension $K$ therefore generates processes whose
	classical cost is not merely bounded below by $K(K{+}1)/2$ but equal to
	it, against a quantum cost of at most $K$. The separation is attained
	rather than being a comparison of lower bounds. We verified both sides
	numerically for $K=2$ to $5$, the Hankel rank equaling $R$ at $50$-digit
	precision and the explicit $R$-state machine reproducing $d_n$ to
	$6\times10^{-17}$. When some phase difference exceeds $\pi/2$ a channel
	weight turns negative, the construction above is no longer a valid
	machine, and the classical cost is then bounded below by $R$ but may be
	larger, since a nonnegative realization can require more states than the
	Hankel rank. Verified at $60$-digit precision, $\rank H$ equals
	the distinct-rate count exactly.
	
	The resource is the set of relative phases $\phi_l$, which are what make
	$\cos(\phi_l-\phi_m)$ available and hence generate the off-diagonal channels
	and their negative weights. A classical memory carries populations but no
	relative phases.
	
	\section{Realization}
	\label{sec:realization}
	The constructions above are implementable in the collective-spin register of Fig.~\ref{fig:circuit}. Driving the memory by the total-spin generators alone would close the
	dynamical Lie algebra on $\mathfrak{su}(2)$, of dimension three
	independently of $\nm$. The gate set of Fig.~\ref{fig:circuit} also
	rotates and couples the work register, and its dynamical Lie algebra is
	the full $\mathfrak{su}(D)$ with
	$D=2^{\nw}(\nm{+}1)$ at every size computed, namely $\dim=15,35,63,99$ for
	$\nw=1$ and $\nm=1,\dots,4$, and $63,143,255,399$ for $\nw=2$, in each case
	$D^2-1$, so every unitary on the joint register is reachable. Reachability
	is not by itself a circuit, so we synthesize one, in the sense of
	Refs.~\cite{binder,liu}, where unitary implementations of quantum models
	of stochastic processes are constructed. Fitting a depth-$L$ block to the two target Kraus
	operators of the $n^\ast=5$ forbidden-time machine, depth
	$L=3$ with twenty parameters reproduces them to $5.6\times10^{-16}$ in
	every block, while $L=2$ with fourteen fails at $7.1\times10^{-2}$, so
	the depth requirement is real rather than an artifact of the optimizer.
	The failure is visible in the process itself. At $L=3$ the realized
	interevent distribution matches the target to $1.4\times10^{-15}$ with
	$d_{n^\ast}=5.9\times10^{-32}$, whereas at $L=2$ the forbidden time is
	not forbidden, $d_{n^\ast}$ being $2.4\times10^{-4}$.
	
	The admissible phase assignments are characterized exactly. Writing
	$c_l=r_le^{\mathrm i\phi_l}$, the normalized weight matrix
	$C_{lm}=a_{lm}/(2\sqrt{a_{ll}a_{mm}})$, $C_{ll}=1$, equals
	$\cos(\phi_l-\phi_m)$ and is therefore the Gram matrix of $K$ unit vectors
	in the plane, of rank at most two. Allowing the emitted event symbol to
	correspond to $J$ work-register outcomes rather than one relaxes this to
	rank at most $2J$, and $J=\lceil K/2\rceil$ removes the constraint
	altogether, so the covered family is full measure in weight space at every
	$K$. Since $J\le2^{\nw}-1$, two work qubits suffice through $K=6$ and three
	through $K=14$, which is $105$ channels. The construction is explicit, an
	eigendecomposition of $C$ with no optimization, and the threshold is exact
	through $K=14$.
	
	\section{Relation to previous work}
	\label{sec:prior}
	That non-orthogonal memory states reduce the memory a quantum model needs is
	due to Gu \emph{et al.}~\cite{gu}, building on the computational-mechanics
	account of causal states~\cite{cy,sc,marzen}. Elliott and Gu~\cite{elliottgu}
	treat renewal processes in continuous time with an infinite-dimensional
	register of finite entropy, and Elliott \emph{et al.}~\cite{elliott} give
	the dual Poisson machine at $\chi=2$, which is the case $K=2$ of Eq.~\eqref{eq:ansatz}. We take that construction as the starting point
	and extend it to arbitrary $K$. Neither reference counts the channels a
	memory generates, and neither builds overlaps from a finite exponential sum.
	
	Zonnios, Boyd and Binder~\cite{zonnios} derive lower bounds on both costs
	from a spectral invariant $|\Lambda|$ of the process,
	$c_C\ge\log\lceil|\Lambda|^{1/2}\rceil$ against
	$c_Q\ge\log\lceil|\Lambda|^{1/4}\rceil$, so that the classical bound is
	quadratically larger than the quantum one, and attribute the gap to
	coherence. Their invariant counts eigenvalues of a second-moment transfer
	operator, whereas the quantity of Proposition~\ref{prop:hmm} is a
	first-moment Hankel rank, and on
	the family here the two can be compared directly. Counting every distinct
	eigenvalue of the transfer operator, which can only overstate their bound,
	the two coincide at $K=2$ and $K=3$, and from $K=4$ the Hankel bound is
	strictly tighter, giving $10,15,21$ at $K=4,5,6$ against at most
	$9,12,16$. Because both of
	their statements are lower bounds, the quadratic relation holds between
	the bounds rather than between the costs, and their worked example attains
	$c_Q=\log2$ against $c_C=\log3$, a separation but not a quadratic one.
	Proposition~\ref{prop:exact} exhibits, among processes with positive
	weights, a family at every $K$ whose costs themselves separate
	quadratically, with the classical cost equal to
	$K(K{+}1)/2$ exactly and the quantum cost at most $K$. Appendix~E of
	that reference leaves open how to choose the phase degrees of freedom
	optimally, which the reset-branch criterion above answers for renewal processes.
	We are not aware of an antecedent for the forbidden-time construction, and the index of convergence of a nonnegative matrix appears not
	to have been used previously as a classical memory lower bound.
	
	None of this prior work concerns a physical memory architecture. A spectral
	bound establishes that a small model exists, but does not produce a circuit
	that realizes it.
	
	\section{Discussion}
	\label{sec:discussion}
	The separations here are statements about model size rather than runtime. A
	collective memory is efficiently classically simulable, since its dimension
	is $\nm+1$, and that is deliberate. Simulation cost and memory dimension are
	independent resources, so a model may be efficiently simulable and still
	require unboundedly less memory than any classical model of the same
	process. The quadratic separation is a theorem rather than a claim about
	classical hardness, and the forbidden-time separation is unbounded with a
	classical side that is Perron--Frobenius.
	
	Two limitations are load bearing. The collective ansatz restricts the
	achievable Kraus operators, and although the renewal subfamily is
	characterized exactly, the non-renewal case remains open. On
	fixed-connectivity hardware the coupling decomposes as
	$\prod_q R_{ZZ}(t/j)$, a star from each work qubit to every memory qubit, so
	one rollout costs $L\,B\,\nw\,\nm$ two-qubit gates. The collective
	construction buys a memory whose dimension is a resource, not a cheaper
	circuit, and on a platform with a collective bus the same coupling is a
	single physical operation.
	
	\section*{Code availability}
	Code that regenerates every number in this Letter, using \texttt{numpy},
	\texttt{scipy} and \texttt{mpmath}, and checks each against the value
	printed here, is available from the author and will be deposited in a
	public repository with a DOI upon publication.
	
	\section*{Acknowledgments}
	
	The author thanks Dirk Kr\"ucker for comments and discussions, and acknowledges support from the Helmholtz Association HGF (Germany), the Hamburgische	Investitions- und F\"orderbank (IFB) (Germany), and the Ministry of Science, Research and Culture of the State of Brandenburg within the Center for Quantum Technologies and Applications (CQTA) (Germany).

\end{document}